\newtheorem{theorem}{Theorem}
\newtheorem{conjecture}{Conjecture}
\newtheorem{corollary}{Corollary}
\newtheorem{proposition}{Proposition}
\newtheorem{lemma}{Lemma}
\newenvironment{proof}{\begin{trivlist}\item[]{\bf Proof.}}{\hfill $\rule{2mm}{2mm}$ \end{trivlist}}
\begin{document}
%---------%---------%---------%---------%---------%---------%---------%--------%
\title{Fewest repetitions in infinite binary words}

\author{Golnaz Badkobeh\\
King's College London, London, UK
\and
Maxime Crochemore \\
King's College London, London, UK\\
and Universit\'e Paris-Est, France}

\maketitle

%---------%---------%---------%---------%---------%---------%---------%--------%
\begin{abstract}

A square is the concatenation of a nonempty word with itself.
A word has period $p$ if its letters at distance $p$ match.
The exponent of a nonempty word is the quotient of its length
 over its smallest period.

In this article we give a proof of the fact that there exists an
 infinite binary word which contains finitely many squares and
 simultaneously avoids words of exponent larger than $7/3$.

Our infinite word contains 12 squares, which is the smallest possible
 number of squares to get the property, and 2 factors of exponent $7/3$.
These are the only factors of exponent larger than $2$.

The value $7/3$ introduces what we call the finite-repetition threshold of
 the binary alphabet.
We conjecture it is $7/4$ for the ternary alphabet, like its repetitive threshold.

\textbf{Keywords}: combinatorics on words, repetitions, word morphisms.

\textbf{MSC}: 68R15 Combinatorics on words.
\end{abstract}

%---------%---------%---------%---------%---------%---------%---------%--------%

%---------%---------%---------%---------%---------%---------%---------%--------%
\section{Introduction}

Repetitions in words is a basic question in Theoretical Informatics, certainly
 because it is related to many applications although it has first been studied
 by Thue at the beginning of the twentieth century \cite{Thue06} with a pure
 theoretical objective.
Related results apply to the design of efficient string pattern matching algorithm,
 to text compression methods and entropy analysis, as well as to the study of
 repetitions in biological molecular sequences among others.

The knowledge of the strongest constraints an infinite word can tolerate
 help for the design and analysis of efficient algorithms.
The optimal bound on the maximal exponent of factors of the word has been
 studied by Thue and many other authors after him.
One of the first discoveries was that an infinite binary word can avoid factors
 with an exponent larger than $2$, called $2^+$-powers.
This has been extended by Dejean \cite{Dej72} to the ternary alphabet
 and her famous conjecture on the repetitive threshold for larger alphabets
 has eventually been proved recently after a series of partial results by
 different authors (see \cite{Rao11,CR11} and references therein).

Another constraint is considered by Fraenkel and Simpson \cite{FS95}:
 their parameter to the complexity of binary infinite words is the number of
 squares occurring in them without any restriction on the number of occurrences.
It is fairly straightforward to check that no infinite binary word can contain
 less than three squares and they proved that some of them contain exactly three.
Two of these squares appear in the cubes $\texttt{000}$ and $\texttt{111}$ so
 that the maximum exponent is $3$ in their word.
In this article we produce an infinite word with few distinct squares and
 a smaller maximal exponent.

Fraenkel and Simpson's proof uses a pair of morphisms, one to get an infinite word by iteration,
 the other to produce the final translation on the binary alphabet.
Their result has been proved with different pairs of morphisms
 by Rampersad et al. \cite{RampersadSW05} (the first morphism is uniform),
 by Harju and Nowotka \cite{HN06} (the second morphism accepts any infinite
 square-free word),
 and by Badkobeh and Crochemore \cite{BC10-3sq} (the simplest morphisms).

In this article we show that we can combine the two types of constraints for the
 binary alphabet:
 producing an infinite word whose maximal exponent of its factor is the smallest
 possible while containing the smallest number of squares.
The maximal exponent is $7/3$ and the number of squares is $12$ to which can be
 added two words of exponent $7/3$.

It is known from Karhum{\"a}ki and Shallit \cite{KarhumakiS04} that if an
 infinite binary word avoids $7/3$-powers it contains an infinite number of squares.
Proving that it contains more than $12$ squares is indeed a matter of simple
 computation.

Shallit \cite{Sha04} has built an infinite binary word avoiding $7/3^+$-powers
 and all squares of period at least $7$.
His word contains $18$ squares.

Our infinite binary word avoids the same powers but contains only $12$ squares,
 the largest having period $8$.
As before the proof relies on a pair of morphisms satisfying suitable properties.
Both morphisms are almost uniform (up to one unit).
The first morphism is weakly square-free on a $6$-letter alphabet, and the
 second does not even correspond to a uniquely-decipherable code but admits
 a unique decoding on the words produced by the first.
To get the morphisms, we first examined carefully the structure of long words
 satisfying the conditions and obtained by backtracking computation.
Then, we inferred the morphisms from the regularities found in the words.

After introducing the definitions and main results in the next section, we provide a
 weakly square-free morphism and the infinite square-free word on $6$ letters it generates
 in Section~\ref{sect-wsf}.
Section~\ref{sect-trans} shows how this word is translated into an infinite
 binary word satisfying the constraints.
In the conclusion we define the new notion of finite-repetition threshold and
 state a conjecture on its value for the 3-letter alphabet.

%---------%---------%---------%---------%---------%---------%---------%--------%
\section{Repetitions in binary words}

A word is a sequence of letters drawn from a finite alphabet.
We consider the binary alphabet $B = \{\texttt{0},\texttt{1}\}$,
 the ternary alphabet $A_3 = \{\texttt{a},\texttt{b},\texttt{c}\}$, and the $6$-letter
 alphabet $A_6 = \{\texttt{a},\texttt{b},\texttt{c},\texttt{d},\texttt{e},\texttt{f}\}$.

A square is a word of the form $uu$ where $u$ is a nonempty (finite) word.
A word has period $p$ if its letters at distance $p$ are equal.
The exponent of a nonempty word is the quotient of its length
 over its smallest period.
Thus, a square is any word with an even integer exponent.

In this article we consider infinite binary words
 in which a small number of squares occur.

The maximal length of a binary word containing less than three square is finite.
It can be checked that it is $18$, e.g. $\texttt{010011000111001101}$ contains
 only $\texttt{00}$ and $\texttt{11}$.
But, as recalled above, this length is infinite if $3$ squares
 are allowed to appear in the word.
A simple proof of it relies on two morphisms $f$ and $h_0$ defined as follows.
The morphism $f$ is defined from $A_3^*$ to itself by
$$\cases{
  f(\texttt{a}) = \texttt{abc}, & \cr
  f(\texttt{b}) = \texttt{ac}, & \cr
  f(\texttt{c}) = \texttt{b}.
  }$$
It is known that the infinite word $\mathbf{f} = f(\texttt{a})^\infty$ it generates
 is square-free (see \cite[Chapter 2]{Lot97}).
The morphism $h_0$ is from $A_3^*$ to $B^*$ and defined by
$$\cases{
  h(\texttt{a}) = \texttt{01001110001101}, & \cr
  h(\texttt{b}) = \texttt{0011}, & \cr
  h(\texttt{c}) = \texttt{000111}.
  }$$
Then the result is a consequence of the next statement.

\begin{theorem}[\cite{BC10-3sq}]\label{theo-1}
The infinite word $\mathbf{h_0} = h_0(f(\mathtt{a})^\infty)$ contains the $3$ squares
 $\texttt{00}$, $\texttt{11}$ and $\texttt{1010}$ only.
 The cubes $\texttt{000}$ and $\texttt{111}$ are the only factors occurring in
 $\mathbf{h}$ and of exponent larger than $2$.
\end{theorem}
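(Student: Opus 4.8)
The plan is to exploit the square-freeness of $\mathbf{f}$ and to transfer that property, in a controlled way, through the morphism $h_0$. Write $\mathbf{f} = f_0 f_1 f_2 \cdots$ with each $f_i \in A_3$, so that $\mathbf{h_0} = h_0(f_0)h_0(f_1)h_0(f_2)\cdots$; the positions where consecutive images meet are the \emph{synchronisation points}. The whole argument splits into a large-period case handled structurally and a small-period case handled by a finite inspection.

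First I would prove a synchronisation (recognisability) property of $h_0$ relative to $\mathbf{f}$: there is a threshold $L$ such that every factor of $\mathbf{h_0}$ of length at least $L$ occurs in only one ``phase'', i.e. all its occurrences sit in the same position relative to the block decomposition and read off the same factor of $\mathbf{f}$. Concretely this amounts to checking that the three images $h_0(\texttt{a})$, $h_0(\texttt{b})$, $h_0(\texttt{c})$ cannot be concatenated (over the bigrams actually allowed in $\mathbf{f}$) in two genuinely different ways that produce a common long factor. Here the cubes $\texttt{000}$ and $\texttt{111}$ occurring inside the images are convenient anchors that pin down the boundaries.

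Second, assuming synchronisation, I would show that $\mathbf{h_0}$ contains no square $XX$ of large period. In an occurrence of such a square the shift by $p = |X|$ maps the first half onto the second; applying synchronisation to a long factor straddling a synchronisation point $t$ of the first half forces $t+p$ to be a synchronisation point as well, so the block decomposition is periodic with period $p$. Reading off the underlying letters of $\mathbf{f}$ then exhibits a square in $\mathbf{f}$, contradicting its square-freeness. Since any factor of exponent larger than $2$ contains, as a prefix, a square of the same period, the same argument bounds the period of every factor of exponent $>2$.

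Finally, all squares and all exponent-$>2$ factors now have period below the bound, hence length below a fixed constant; as such factors already occur in $h_0$ applied to a bounded prefix of $\mathbf{f}$, it suffices to enumerate the factors of that prefix and verify directly that the only squares are $\texttt{00}$, $\texttt{11}$ and $\texttt{1010}$ and that the only factors of exponent larger than $2$ are $\texttt{000}$ and $\texttt{111}$. I expect the synchronisation step to be the main obstacle: the variable image lengths ($14$, $4$, $6$) and the fact that a short image such as $h_0(\texttt{b}) = \texttt{0011}$ can align in several positions make the case analysis delicate, and it is there that one must eliminate all spurious alignments before the clean reduction to $\mathbf{f}$ goes through.
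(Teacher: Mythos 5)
Your plan is, in substance, the standard argument, and it is the same method this paper uses for its own analogous results: note first that the paper does not actually prove Theorem~\ref{theo-1} --- it is quoted from \cite{BC10-3sq} --- but both the cited proof and this paper's proofs of Proposition~\ref{prop-1} and Proposition~\ref{prop-2} proceed exactly as you propose, reducing any square with enough synchronisation to a square of the square-free inner word and disposing of the short cases by finite inspection. Two points where your outline is looser than what a complete proof needs. First, your synchronisation step is invoked as a generic recognisability threshold $L$, whereas the paper-style argument obtains synchronisation concretely: it anchors on occurrences of a factor identifying a single codeword and runs the case analysis over the finitely many gap words between anchors (here matters are simpler than for the paper's $h$, since $\{h_0(\texttt{a}),h_0(\texttt{b}),h_0(\texttt{c})\}$ is in fact a prefix code, so unique decipherability is free; but the phase-uniqueness of long factors still has to be exhibited by the finite check you defer, and your suggestion to use the cubes $\texttt{000}$, $\texttt{111}$ as anchors needs care, as each occurs inside more than one codeword image). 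Second, the claim that all short factors ``already occur in $h_0$ applied to a bounded prefix of $\mathbf{f}$'' is not automatic: factors of bounded length of $\mathbf{h_0}$ arise from factors of bounded length of $\mathbf{f}$, and to capture all of these in a prefix you must invoke uniform recurrence of $\mathbf{f}$ with an explicit recurrence bound (it holds, $f$ being primitive), or more simply enumerate the short factors of $\mathbf{f}$ by closure under $f$ starting from the letters. With these two steps made effective, your outline is a correct proof along the same lines as the source.
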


It is impossible to avoid $2^+$-powers and keep a bounded number of squares.
As proved by Karhum{\"a}ki and Shallit \cite{KarhumakiS04}, the exponent has to
 go up to $7/3$ to allow the property.

In the two following sections we define two morphisms and derive the properties that we
 need to prove the next statement.

\begin{theorem}\label{theo-2}
There exists an infinite binary word whose factors have an exponent at most $7/3$
 and that contains $12$ squares, the fewest possible.
\end{theorem}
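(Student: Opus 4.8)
The plan is to prove the statement constructively, by exhibiting a single infinite binary word and verifying that it meets both requirements. Following the template of Theorem~\ref{theo-1}, I would build this word as the composition of two morphisms. First, a morphism $g$ on the $6$-letter alphabet $A_6$, weakly square-free and prolongable on some letter, whose fixed point $\mathbf{g} = g(\ldots)^\infty$ is an infinite square-free word over $A_6$; this is the object of Section~\ref{sect-wsf}. Second, a morphism $h$ from $A_6^*$ to $B^*$ that translates $\mathbf{g}$ into the candidate binary word $\mathbf{w} = h(\mathbf{g})$; this is the object of Section~\ref{sect-trans}. Both morphisms are to be chosen almost uniform, so that the images of letters have comparable lengths and the block structure of $\mathbf{w}$ stays under control. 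The bulk of the proof then splits into an upper-bound part (the exponent of every factor of $\mathbf{w}$ is at most $7/3$ and only the $12$ announced squares occur) and an optimality part (no infinite binary word of maximal exponent $7/3$ contains fewer than $12$ squares).

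For the upper-bound part, the strategy is the standard ``short factors by computation, long factors by synchronization'' dichotomy. I would first fix a length threshold $N$, determined by the lengths of the $h$-images, and verify by exhaustive inspection that every factor of $\mathbf{w}$ of length at most $N$ has exponent at most $7/3$ and that the squares among them are exactly the $12$ announced (the largest of period $8$). For factors longer than $N$, the key is a synchronization lemma for $h$: although $h$ is not a code, its restriction to the factors of the square-free word $\mathbf{g}$ admits a unique decoding, so that any sufficiently long factor of $\mathbf{w}$ determines the $A_6$-factor it comes from together with the positions of the block boundaries, up to a bounded margin. Granting this, a factor of $\mathbf{w}$ of exponent exceeding $7/3$, or a new long square, would through the unique decoding force a square or a repetition of exponent larger than $2$ in $\mathbf{g}$, contradicting its square-freeness. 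The argument here requires carrying a period of $\mathbf{w}$ through the desynchronizing effect of the almost-uniform images and checking that, once long enough, it projects to a genuine period in $\mathbf{g}$.

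I expect the synchronization lemma to be the main obstacle, precisely because $h$ is not uniquely decipherable as an abstract code: distinct sequences of letters can in principle produce the same binary block, so one cannot argue generically. The decoding must instead be justified on the images that actually occur, exploiting the specific positions of the patterns $\texttt{000}$ and $\texttt{111}$ and of the shorter blocks, together with the fact that the pre-image lies in the restricted set of square-free words generated by $g$. Establishing that the cut points are forced, and bounding the residual ambiguity so that periods straddling block boundaries are still covered, is the delicate step; it is also where the almost-uniformity of the morphisms and the careful choice of the $h$-images are designed to pay off.

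Finally, for optimality it suffices to show that no infinite binary word of maximal exponent $7/3$ can contain fewer than $12$ distinct squares. As recalled from Karhum{\"a}ki and Shallit \cite{KarhumakiS04}, such words necessarily contain infinitely many square occurrences, so the number of \emph{distinct} squares is the meaningful parameter. A finite backtracking computation then shows that the tree of binary words avoiding $7/3^+$-powers and using at most $11$ distinct squares is finite, hence that no such word is infinite; every infinite word of maximal exponent $7/3$ therefore realizes at least $12$ distinct squares. Together with the word $\mathbf{w}$ constructed above, this fixes the fewest possible number at exactly $12$.
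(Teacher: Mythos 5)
Your plan follows essentially the same route as the paper: a weakly square-free morphism $g$ on $A_6$ generating the square-free word $\mathbf{g}$, an almost-uniform translation $h$ that is not a uniquely-decipherable code but parses uniquely on factors of $\mathbf{g}$ (the paper synchronizes on occurrences of $h(\mathtt{a})=\mathtt{10011}$, pulling any offending square or $7/3^+$-power back to a square in $g^k(\mathtt{a})$), plus a pruned backtracking computation showing that binary $7/3^+$-power-free words with at most $11$ distinct squares have bounded length, which gives optimality. One minor slip: the decoding cannot exploit occurrences of $\mathtt{000}$ and $\mathtt{111}$, since these are cubes (exponent $3>7/3$) and hence absent from the constructed word --- they occur in the $3$-squares word of Theorem~\ref{theo-1}; the actual synchronization rests on $\mathtt{10011}$ together with the missing doublets and triplets of $\mathbf{g}$ (Lemmas~\ref{lemm-1} and~\ref{lemm-2}).
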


Our infinite binary word contain the $12$ squares
 $\texttt{0}^2$, $\texttt{1}^2$, $(\texttt{01})^2$, $(\texttt{10})^2$,
 $(\texttt{001})^2$, $(\texttt{010})^2$, $(\texttt{011})^2$,
 $(\texttt{100})^2$, $(\texttt{101})^2$, $(\texttt{110})^2$,
 $(\texttt{01101001})^2$, $(\texttt{10010110})^2$,
 and the two words $\texttt{0110110}$ and $\texttt{1001001}$ of exponent $7/3$.

Proving that it is impossible to have less than $12$ squares in the previous
 statement results from the next table.
It has been obtained by pruned backtracking sequential computation that avoids
 exhaustive search.
It shows the maximal length of binary words whose factors have an exponent at
 most $7/3$, for each number $s$ of squares, $0 \leq s \leq 11$.
\[\begin{array}{|lcccccccccccc|}
      \hline
      s&=0&1&2&3&4&5&6&7&8&9&10&11 \\
      \hline
      \ell(s)&=3&5& 8 & 12 & 14 & 18 & 24 & 30 & 37 & 43 & 83 & 116  \\
      \hline
      \end{array}\]

%---------%---------%---------%---------%---------%---------%---------%--------%
\section{A weakly square-free morphism on six letters}
\label{sect-wsf}

In this section we consider a specific morphism used for the proof of Theorem~\ref{theo-2}.
It is called $g$ and defined from $A_6^*$
 to itself by:
$$\cases{
  g(\texttt{a}) = \texttt{abac}, & \cr
  g(\texttt{b}) = \texttt{babd}, & \cr
  g(\texttt{c}) = \texttt{eabdf}, &\cr
  g(\texttt{d}) = \texttt{fbace}, &\cr
  g(\texttt{e}) = \texttt{bace}, &\cr
  g(\texttt{f}) = \texttt{abdf}.
}$$

We prove below that the morphism is weakly square-free in the sense that
 $\mathbf{g} = g^\infty(\texttt{a})$ is an infinite square-free word, that is,
 all its finite factors have an exponent smaller than $2$.
Note that however it is not square-free since for example $g(\texttt{cf}) = \texttt{eabdfabdf}$
 contains the square $(\texttt{abdf})^2$.
This prevents from using characterisation of square-freeness of the morphism,
 or equivalently of the fixed points of the morphism.
As far as we know only an ad hoc proof is possible.

The set of codewords $g(a)$'s ($a \in A_6$) is a prefix code and therefore a
 uniquely-decipherable code.
Note also that any occurrence of $\texttt{abac}$ in $g(w)$, for $w \in A_6^*$,
 uniquely corresponds to an occurrence of $\texttt{a}$ in $w$.
The proof below relies on the fact that not all doublets and triplets
 (words of length $2$ and $3$ respectively)
 occur in $\mathbf{g}$, as the next statements show.

\begin{lemma}\label{lemm-1}
The set of doublets occurring in $\mathbf{g}$ is
 $$D = \{\mathtt{ab}, \mathtt{ac}, \mathtt{ba}, \mathtt{bd}, \mathtt{cb}, \mathtt{ce},
 \mathtt{da}, \mathtt{df}, \mathtt{ea}, \mathtt{fb}\}.$$
\end{lemma}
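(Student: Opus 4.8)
The plan is to prove the two inclusions $S \subseteq D$ and $D \subseteq S$ separately, where $S$ denotes the set of doublets actually occurring in $\mathbf{g}$. The whole argument rests on the observation that, since $\mathbf{g} = g^\infty(\texttt{a})$ is the limit of the words $g^n(\texttt{a}) = g(g^{n-1}(\texttt{a}))$, every length-$2$ factor of $g^{n+1}(\texttt{a})$ either lies entirely inside a single image block $g(x)$ (an \emph{internal} doublet), or straddles the junction between two consecutive blocks $g(x)g(y)$, in which case it is the pair formed by the last letter of $g(x)$ and the first letter of $g(y)$, with $xy$ itself a doublet of $g^n(\texttt{a})$ (a \emph{boundary} doublet). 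This dichotomy reduces the statement to two finite checks.

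For $S \subseteq D$ I would argue by induction on $n$ that every doublet of $g^n(\texttt{a})$ lies in $D$. The base case $n=1$ is immediate, since $g(\texttt{a}) = \texttt{abac}$ contributes only $\texttt{ab}$, $\texttt{ba}$, $\texttt{ac}$. For the inductive step I verify the two closure properties: (a) for each of the six letters $x$, all internal doublets of $g(x)$ belong to $D$; and (b) for each $xy \in D$, the boundary doublet obtained from the last letter of $g(x)$ and the first letter of $g(y)$ again belongs to $D$. Both amount to routine tabulations — recording the first and last letter of each of the six short images and scanning them. Granting (a) and (b), any internal doublet of $g^{n+1}(\texttt{a})$ is in $D$ by (a), while any boundary doublet comes from a junction $g(x)g(y)$ with $xy$ a doublet of $g^n(\texttt{a})$, hence $xy \in D$ by the induction hypothesis, so the junction pair lies in $D$ by (b). Passing to the limit yields $S \subseteq D$.

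For the reverse inclusion I would simply exhibit each of the ten doublets in an explicit finite prefix of $\mathbf{g}$: the nine doublets other than $\texttt{fb}$ already occur in $g^2(\texttt{a}) = \texttt{abacbabdabaceabdf}$, and $\texttt{fb}$ appears one iteration later at the junction created by applying $g$ to the factor $\texttt{cb}$, since $g(\texttt{c})$ ends in $\texttt{f}$ and $g(\texttt{b})$ begins with $\texttt{b}$. I do not expect a genuine obstacle here: the content of the lemma is entirely finite once the internal/boundary dichotomy is set up. The only point demanding care is completeness of the bookkeeping — checking property (b) for \emph{every} doublet of $D$ (not merely for those whose junctions one guesses to be relevant) so that the inductive set is genuinely closed, and checking property (a) for all six images rather than only for the letters one expects to occur.
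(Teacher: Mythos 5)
Your proposal is correct and takes essentially the same route as the paper: the paper's proof is exactly your internal/boundary closure computation — the eight doublets internal to the letter images, plus \texttt{cb} and \texttt{da} arising at junctions of images of doublets already in $D$, with the images of the resulting ten doublets creating nothing new. Your explicit induction on $g^n(\texttt{a})$ and the verification that the nine doublets other than \texttt{fb} occur in $g^2(\texttt{a}) = \texttt{abacbabdabaceabdf}$ (with \texttt{fb} appearing one iteration later) merely make precise the bookkeeping the paper leaves implicit.
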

\begin{proof}
Note that all letters of $A_6$ appear in $\mathbf{g}$.
Then doublets
 \texttt{ab}, \texttt{ac}, \texttt{ba}, \texttt{bd},
 \texttt{ce}, \texttt{df}, \texttt{ea}, \texttt{fb}
 appear in $\mathbf{g}$ because they appear in the images of one letter.
The images of these doublets generate two more doublets, \texttt{cb} and \texttt{da},
 whose images do not create new doublets.
\end{proof}

\begin{lemma}\label{lemm-2}
\item The set of triplets in $\mathbf{g}$ is
 $$T = \{\mathtt{aba}, \mathtt{abd}, \mathtt{acb}, \mathtt{ace},
 \mathtt{bab}, \mathtt{bac}, \mathtt{bda}, \mathtt{bdf}, \mathtt{cba},
 \mathtt{cea}, \mathtt{dab}, \mathtt{dfb}, \mathtt{eab}, \mathtt{fba}\}.$$
\end{lemma}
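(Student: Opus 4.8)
The plan is to determine the set $T$ of triplets occurring in $\mathbf{g}$ by the same bootstrapping technique used for the doublets in Lemma~\ref{lemm-1}, exploiting the fact that $\mathbf{g}$ is a fixed point of $g$ and that every triplet of $\mathbf{g}$ must arise by applying $g$ to a factor of bounded length. First I would record the obvious containment: any triplet appearing inside a single image $g(x)$ (for $x \in A_6$) must belong to $T$; scanning the six codewords \texttt{abac}, \texttt{babd}, \texttt{eabdf}, \texttt{fbace}, \texttt{bace}, \texttt{abdf} yields the triplets occurring strictly within one image. These are the ``seed'' triplets.

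Next I would account for triplets that straddle a boundary between consecutive images $g(x)g(y)$. Because $\mathbf{g} = g(\mathbf{g})$, every such adjacent pair $xy$ must itself be a doublet of $\mathbf{g}$, so by Lemma~\ref{lemm-1} only the doublets in $D$ can occur as $xy$. Hence it suffices to examine, for each doublet $xy \in D$, the word $g(x)g(y)$ and collect every length-$3$ factor crossing the junction (the factor formed by the last letter of $g(x)$ together with the first two of $g(y)$, and the last two of $g(x)$ together with the first of $g(y)$). Since every image begins with a letter in $\{\texttt{a},\texttt{b},\texttt{e},\texttt{f}\}$ and ends with a letter in $\{\texttt{c},\texttt{d},\texttt{e},\texttt{f}\}$, these boundary triplets are highly constrained, and I expect the union of seed triplets and boundary triplets to already be contained in the claimed set $T$. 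Triplets spanning three consecutive images need not be considered separately, because the shortest image has length $4$, so a triplet never reaches across two boundaries.

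This establishes that the set of triplets is a subset of $T$; the reverse inclusion requires showing each of the $14$ listed triplets genuinely occurs. That follows by exhibiting each one either inside some $g(x)$ or across some junction $g(x)g(y)$ with $xy \in D$, which the same case analysis provides as a byproduct. The main obstacle, as in Lemma~\ref{lemm-1}, is ensuring the computation is genuinely closed under application of $g$: one must verify that applying $g$ to triplets in $T$ generates no triplet outside $T$, so that the list is truly complete rather than merely a lower bound reachable after finitely many iterations. Concretely, I would check that for every triplet $t \in T$, all length-$3$ factors of $g(t)$ again lie in $T$; since $g(t)$ is short and the boundary letters are restricted, this closure check is finite and routine, and it is the step where an omitted case would break the argument.
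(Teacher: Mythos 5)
Your proof is correct and follows essentially the same route as the paper: since every codeword of $g$ has length at least $4$, a triplet of $\mathbf{g}$ either lies inside a single image $g(x)$ or straddles exactly one junction $g(x)g(y)$ with $xy \in D$ by Lemma~\ref{lemm-1}, and collecting the internal and junction triplets yields precisely the $14$ listed words, each of which genuinely occurs because $\mathbf{g}=g(\mathbf{g})$ and all letters and all doublets of $D$ occur in $\mathbf{g}$. Your one deviation is the concluding closure check (verifying that $g(t)$ for $t\in T$ produces no triplet outside $T$): contrary to your framing of it as ``the main obstacle,'' it is redundant, since Lemma~\ref{lemm-1} already pins down the junction doublets and thereby makes the one-pass enumeration complete --- the fixed-point iteration you are worried about is needed only at the doublet level, where the paper carries it out in proving Lemma~\ref{lemm-1}.
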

\begin{proof}
Triplets appear in the images of a letter or of a doublet.
Triplets found in images of one letter are:
 \texttt{aba}, \texttt{abd}, \texttt{ace}, \texttt{bab},
 \texttt{bac}, \texttt{bdf}, \texttt{eab}, \texttt{fba}.
The images of doublets occurring in $\mathbf{g}$, in set $D$ of Lemma~\ref{lemm-1},
 contain the extra triplets:
 \texttt{acb}, \texttt{bda}, \texttt{cba}, \texttt{cea}, \texttt{dab}, \texttt{dfb}.
\end{proof}

To prove that the infinite word $\mathbf{g}$ is square-free we first show that it
 contains no square with less than four occurrences of the word
 $g(\texttt{a})=\texttt{abac}$.
Then, we show it contains no square with at least four occurrences of it.
The word \texttt{abac} is chosen because its occurrences in $\mathbf{g}$
 correspond to $g(\texttt{a})$ only, so they are used to synchronise the parsing of
 the word according to the codewords $g(\texttt{a})$'s.

\begin{lemma}\label{lemm-3}
No square in $\mathbf{g}$ can contain less than four occurrences of $\mathtt{abac}$.
\end{lemma}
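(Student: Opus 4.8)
The plan is to show that occurrences of $\texttt{abac}$ recur in $\mathbf{g}$ with a bounded gap, so that any factor carrying fewer than four of them is short; the lemma then reduces to inspecting finitely many short factors. First I would record the elementary observation that every image $g(\alpha)$, for $\alpha\in A_6$, contains the letter $\texttt{a}$. Since $\mathbf{g}=g(\mathbf{g})$ and every codeword has length at most $5$, this already forces $\texttt{a}$ to occur in $\mathbf{g}$ with bounded gap. To turn this into a bound on the gap between occurrences of $\texttt{abac}$, I would use the fact, noted above, that an occurrence of $\texttt{abac}$ in $g(w)$ corresponds exactly to an occurrence of $\texttt{a}$ in $w$; hence the gaps between consecutive $\texttt{abac}$'s in $\mathbf{g}=g(\mathbf{g})$ are governed by the gaps between consecutive $\texttt{a}$'s in the pre-image $\mathbf{g}$.

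Next I would bound the length of an $\texttt{a}$-free run in $\mathbf{g}$ using Lemmas~\ref{lemm-1} and~\ref{lemm-2}. Following the admissible doublets and triplets among the letters $\{\texttt{b},\texttt{c},\texttt{d},\texttt{e},\texttt{f}\}$ shows that every such run terminates quickly: the only triplet extending $\texttt{cb}$, $\texttt{ce}$ or $\texttt{fb}$ is $\texttt{cba}$, $\texttt{cea}$ or $\texttt{fba}$ respectively, each of which forces the next letter to be $\texttt{a}$. Tracing the remaining possibilities yields an explicit constant bounding the longest $\texttt{a}$-free factor, hence an explicit constant $G$ bounding the gap between the starts of consecutive occurrences of $\texttt{abac}$ in $\mathbf{g}$.

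Finally I would combine these estimates. If $uu$ is a square containing fewer than four occurrences of $\texttt{abac}$, then its half $u$ contains at most one such occurrence, because each occurrence lying inside $u$ is repeated in the second copy. The gap bound then gives $|u|<2G$, hence $|uu|$ is smaller than an explicit constant $L$. It therefore suffices to examine every factor of $\mathbf{g}$ of length below $L$ --- all of which appear in a sufficiently long prefix of $\mathbf{g}$ --- and to verify that none of them is a square. The main obstacle I expect is quantitative rather than conceptual: making the doublet/triplet case analysis yield a gap bound $G$ small enough that the resulting finite verification stays manageable, while correctly accounting for the occurrences of $\texttt{abac}$ that straddle the midpoint of $uu$ or the junction between its two halves.
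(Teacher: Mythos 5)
Your proposal is correct and takes essentially the same route as the paper: both exploit the synchronization of $\mathtt{abac}$ with $g(\mathtt{a})$ together with the consequence of Lemma~\ref{lemm-2} that consecutive occurrences of $\mathtt{a}$ in $\mathbf{g}$ are separated by at most $4$ letters (worst gap $\mathtt{bdfb}$), thereby reducing the lemma to a finite, machine-verifiable check. The only cosmetic difference is that the paper bounds the pre-image --- the shortest $x$ with $ww$ a factor of $g(x)$ lies in the finite set $\mathtt{a}((A_6\setminus\{\mathtt{a}\})^*\mathtt{a})^5$, and the images of all such factors of $\mathbf{g}$ are verified square-free --- whereas you bound $|uu|$ directly via the occurrence count in the half $u$ and check all sufficiently short factors of $\mathbf{g}$.
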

\begin{proof}
Assume by contradiction that a square $ww$ in $\mathbf{g}$ contains
 less than four occurrences of \texttt{abac}.
Let $x$ be the shortest word whose image by $g$ contains $ww$.

Then $x$ is a factor of $\mathbf{g}$ that belongs to
 the set $\texttt{a}((A_6\setminus\{\texttt{a}\})^*\texttt{a})^5$.
Since two consecutive occurrences of \texttt{a} in $\mathbf{g}$ are
 separated by a string of length at most $4$ (the largest such string is indeed
 \texttt{bdfb} as a consequence of Lemma~\ref{lemm-2}), the set is finite.

The square-freeness of all these factors has been checked via an elementary implementation
 of the test, which proves the result.
\end{proof}

\begin{proposition}\label{prop-1}
No square in $\mathbf{g}$ can contain at least four occurrences of $\mathtt{abac}$.
\end{proposition}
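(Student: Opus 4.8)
The plan is to exploit the synchronising property of the codeword $\texttt{abac}$. Suppose for contradiction that a square $ww$ occurs in $\mathbf{g}$ and contains at least four occurrences of $\texttt{abac}$. Since each occurrence of $\texttt{abac}$ corresponds uniquely to an occurrence of the letter $\texttt{a}$ in the preimage (as noted before Lemma~\ref{lemm-1}), and $g$ is a prefix code hence uniquely decipherable, I would let $x$ be the shortest factor of $\mathbf{g}$ whose image under $g$ contains $ww$, and then use the $\texttt{abac}$ occurrences to align $ww$ with the codeword boundaries of $g(x)$. The first step is therefore to set up this parsing carefully and record where $ww$ sits relative to the block decomposition of $g(x)$.

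Next I would transfer the square from the image level down to the preimage level. Because $ww$ contains at least four occurrences of $\texttt{abac}$, there are at least two such occurrences inside the first half $w$; pairing the $i$-th occurrence in the first half with the $i$-th occurrence in the second half, and using the fact that these occurrences mark copies of $g(\texttt{a})$ that are genuine codeword boundaries, I would argue that the period of $ww$ must itself be a concatenation of full codewords. That is, the length $|w|$ is a sum of lengths $|g(\alpha)|$ over a contiguous block of letters of $x$, so that $x$ contains a corresponding square (or near-square) $yy$ with strictly fewer letters, where $g(y)$ accounts for the period $w$. The key technical point is to confirm that a misaligned period is impossible: if the period did not respect codeword boundaries, one could derive a forbidden doublet or triplet, contradicting Lemma~\ref{lemm-1} or Lemma~\ref{lemm-2}.

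Having produced a shorter square $yy$ in $\mathbf{g}$, I would invoke minimality. Either $yy$ is itself a square in $\mathbf{g}$ with fewer occurrences of $\texttt{abac}$, in which case I would appeal to Lemma~\ref{lemm-3}, or it is a strictly shorter square preimage, which would contradict the choice of $x$ as the shortest preimage. This descent argument is what closes the contradiction and proves that no such square can exist.

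The main obstacle I anticipate is the boundary bookkeeping in the second step: the occurrences of $\texttt{abac}$ pin down the $\texttt{a}$-positions, but between consecutive $\texttt{a}$'s the codewords have unequal lengths (four for $\texttt{a},\texttt{b},\texttt{e},\texttt{f}$ and five for $\texttt{c},\texttt{d}$), so I must rule out the possibility that a period $w$ begins and ends in the interior of codewords in a compensating way that still repeats without forcing full-block alignment. Handling this will require the synchronisation delay to be controlled by the restricted doublet and triplet sets, and verifying that the only way two copies of $w$ can overlap consistently is with the period being a product of complete images $g(\alpha)$.
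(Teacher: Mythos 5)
Your setup (synchronising on \texttt{abac}, descending to the preimage, closing with Lemma~\ref{lemm-3} and a minimality argument) matches the paper's strategy in outline, but there is a genuine gap at exactly the step you flag as an ``obstacle'': the claim that the period $w$ must be a concatenation of full codewords is not what is true, and you give no argument for it. Writing $ww = v_0(\texttt{abac}\cdots\texttt{abac})u_1\,v_1(\texttt{abac}\cdots\texttt{abac})u_2$, the word $w$ generically begins inside one codeword ($v_0$ a proper suffix-part) and ends inside another ($u_1$ a proper prefix-part); what one can extract in the preimage is not a word $y$ with $g(y)=w$ but a \emph{shifted} square of the form $zUzU$ or $UzUz$, where $g(U)$ is the central part and $z$ is the gap word with $g(z)=u_1v_1$. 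Establishing this is the entire content of the paper's proof: a five-way case analysis on $u_1v_1 \in \{g(\texttt{b}), g(\texttt{cb}), g(\texttt{bd}), g(\texttt{ce}), g(\texttt{bdfb})\}$ (Table~\ref{tabl-gap1}), pinning down case by case how $v_0$ and $u_2$ must complete to codewords. These are not mere ``forbidden doublet or triplet'' checks as you suggest: for instance, in the $g(\texttt{ce})$ case one must compare the suffixes \texttt{fba} versus \texttt{aba} forced on the two copies of $U$ by the two possible right extensions of $g(\texttt{c})$ --- a two-sided consistency argument between the halves --- and the $g(\texttt{bdfb})$ case requires a further split on $|v_1|$. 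Your final paragraph correctly identifies this as the crux but then only states a hope that the restricted doublet and triplet sets will control it; that is a plan, not a proof.

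You also omit a case the paper must treat separately: an occurrence of \texttt{abac} can straddle the midpoint of $ww$, i.e.\ lie inside $u_1v_1$, and your pairing of the $i$-th occurrences in the two halves accounts only for occurrences wholly inside a half. The paper handles this by using that \texttt{cb} occurs only in $g(\texttt{ab})$ and \texttt{da} only in $g(\texttt{ba})$ to shift the square by $1$, $2$ or $3$ positions into one whose midpoint gap is free of \texttt{abac}, reducing to the previous cases; without this step the case analysis is incomplete. Finally, your descent mechanism needs repair: producing a shorter square \emph{elsewhere} in $\mathbf{g}$ does not contradict the minimality of your chosen preimage $x$, since $x$ was chosen minimal for the fixed square $ww$. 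The clean formulations are either the paper's (take $k$ maximal with $g^k(\texttt{a})$ square-free, transfer the square in $g^{k+1}(\texttt{a})$ down, and contradict the square-freeness of $g^k(\texttt{a})$) or a minimal-counterexample variant: take $ww$ a shortest square in $\mathbf{g}$, note by Lemma~\ref{lemm-3} it has at least four occurrences of \texttt{abac}, and contradict minimality with the strictly shorter preimage square $zUzU$.
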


\begin{table}
\caption{Gaps of \texttt{abac}: words between consecutive occurrences of \texttt{abac}
 in $\mathbf{g}$.
They are images of gaps between consecutive occurrences of \texttt{a}.
}\label{tabl-gap1}
$$\begin{array}{|lclr|}
\hline
 g(\texttt{b})  &=& \texttt{babd}     & 4  \\
 g(\texttt{cb}) &=& \texttt{eabdfbabd} & 9 \\
 g(\texttt{bd})  &=& \texttt{babdfbace}     & 9 \\
 g(\texttt{ce}) &=& \texttt{eabdfbace} &9 \\
 g(\texttt{bdfb}) &=& \texttt{babdfbaceabdfbabd} & 17 \\
\hline
\end{array}$$
\end{table}

\begin{proof}
The proof is by contradiction:
 let $k$ be the maximal integer for which $g^k(\texttt{a})$ is square-free
 and let $ww$ be a square occurring in $g^{k+1}(\texttt{a})$ and containing at least
 $4$ occurrences of \texttt{abac}.
Distinguishing several cases according to the words between consecutive occurrences
 of \texttt{abac} (see Table \ref{tabl-gap1}),
 we deduce that $g^k(\texttt{a})$ is not square-free, the contradiction.

The square $ww$ can be written
$$\underbrace{v_0(\texttt{abac} \cdots \texttt{abac})u_1}%
\underbrace{v_1(\texttt{abac} \cdots \texttt{abac})u_2}$$
 where $v_0$, $u_1$, $v_1$, and $u_2$ contain no occurrence of \texttt{abac}.
It occurs in the image of a factor of $\mathbf{g}$.
The central part of $w$ starting and ending with \texttt{abac} is the image
 of a unique word $U$ factor of $g^{k}(\texttt{a})$ due to the code property:
$$g(U) = v_0^{-1}wu_1^{-1} = v_1^{-1}wu_2^{-1}.$$
We split the proof in two parts according to whether \texttt{abac} occurs in
 $u_1v_1$ or not.

\paragraph{No \texttt{abac} in $u_1v_1$.}
We consider five cases according to the value of $u_1v_1$, the gap of \texttt{abac}
 (see Table~\ref{tabl-gap1}).

\begin{enumerate}
\item $u_1v_1 = \texttt{babd}$ corresponds to $g(\texttt{b})$ only.
If either $u_1$ or $v_1$ is empty, then $v_0$ or $u_2$ is $g(\texttt{b})$, in
 either case we get $\texttt{b}U\texttt{b}U$ or $U\texttt{b}U\texttt{b}$
 that are squares.
Else $v_0$ has a suffix \texttt{d} so it belongs to $g(\texttt{b})$,
 and again $\texttt{b}U\texttt{b}U$ is a square in $\mathbf{g}$.

\item $u_1v_1 =\texttt{eabdfbabd}$ corresponds to $g(\texttt{cb})$ only.
An occurrence of \texttt{cb} always belongs to $g(\texttt{ab})$ therefore
 $U$ has a prefix \texttt{abd} and a suffix \texttt{aba}, and
 the letter after \texttt{aba} is \texttt{c}.
If $v_1$ is empty, $u_2$ has a prefix
 \texttt{eabdfbabd} so it is $g(\texttt{cb})$ and again $U\texttt{cb}U\texttt{cb}$
 is a square.
If $v_1$ is not empty then $v_0$ has a suffix \texttt{d}, suffix of $g(\texttt{b})$,
 therefore $\texttt{b}U\texttt{cb}U\texttt{c}$ is a square.

\item $u_1v_1 =  \texttt{babdfbace}$ corresponds to $g(\texttt{bd})$.
The word \texttt{abda} is a factor of $g(\texttt{ba})$ only so $U$ has a prefix
 \texttt{aba} and a suffix \texttt{ba}.
If $|u_1|=0$, $v_0 = \texttt{babdfbace}$ can only be $g(\texttt{bd})$
 so $\texttt{bd}U\texttt{bd}U$ is a square.
Otherwise $u_2$ must have a prefix \texttt{b}; since $U$ has a suffix
 \texttt{ba} the next letter after it is either \texttt{b} or
 \texttt{c}; as only $g(\texttt{b})$ is prefixed by \texttt{b} the letter
 is \texttt{b} so $u_2$ has a prefix or is a prefix of $g(\texttt{b})$,
 and we know that \texttt{bab} is always followed by \texttt{d} thus
 $U\texttt{bd}U\texttt{bd}$ is a square.

\item $u_1v_1 =\texttt{eabdfbace}$ corresponds to $g(\texttt{ce})$ only.
If $u_1$ is empty, $v_0$ is $g(\texttt{ce})$ so $\texttt{ce}U\texttt{ce}U$
 is a square.
Otherwise, $u_2$ has a prefix or is a prefix of $g(\texttt{c})$;
 the next letter after $g(\texttt{c})$ is either \texttt{b} or \texttt{e};
 (see Lemma~\ref{lemm-1});
 if it is \texttt{b} the right-most $U$ has a suffix \texttt{aba}
 but the left-most $U$ has a suffix \texttt{fba}, which cannot be.
Therefore the letter after \texttt{c} is \texttt{e} and $U\texttt{ce}U\texttt{ce}$
 is a square.

\item $u_1v_1 =  \texttt{babdfbaceabdfbabd}$.
If $|v_1|>12$, $v_0$ has a suffix $g(\texttt{dfb})$ and the letter before it
 is \texttt{b}, so $\texttt{bdfb}U\texttt{bdfb}U$ is a square.
If $0 < |v_1| \leq 12$, then $|u_1| \geq 5$, so $u_2$ has a prefix or is a prefix
 of $g(\texttt{bd})$ so the next letter is either \texttt{a} or \texttt{f}.
If it is \texttt{a} the right-most $U$ has a suffix \texttt{ba} but $v_0$
 is a suffix of or has a suffix $g(\texttt{b})$; the letter before it is either
 $g(\texttt{c})$ or $g(\texttt{f})$; if it is \texttt{c} then $U$ has a prefix
 \texttt{abd} and \texttt{bdfbabd} is from the concatenation of
 $g(\texttt{c})$ and $g(\texttt{b})$ or $g(\texttt{dfb})$;
 in either case the left occurrence of $U$ will have
 \texttt{ea} as a suffix, a contradiction since $\texttt{fb}U\texttt{bdfb}U\texttt{bd}$ and
 $U\texttt{bdfb}U\texttt{bdfb}$ are both squares.
\end{enumerate}

\paragraph{An occurrence of \texttt{abac} in $u_1v_1$.}
Then the suffix of $u_1$ is either \texttt{aba}, \texttt{ab} or \texttt{a} while the
 respective prefix of $v_1$ is \texttt{c}, \texttt{ac} or \texttt{bac}.

Note that \texttt{c} is followed either by \texttt{b} or \texttt{e} (Lemma~\ref{lemm-1})
 and that \texttt{cb} occurs only in the image of \texttt{ab}.
Then if the occurrence of \texttt{abac} is followed by \texttt{b},
 the occurrence of \texttt{cb} in $v_0$ is preceded by \texttt{aba}, and then
 there is a square starting 1, 2 or 3 positions before the occurrence of $ww$,
 which brings us back to the first case.
Therefore, \texttt{abac} is followed by \texttt{e}.

The occurrence of \texttt{abace} comes from $g(\mathtt{ac})$, and by Lemma~\ref{lemm-2}
 $u_1v_1$ contains an occurrence of $g(\mathtt{bac})$.
So, the occurrence of \texttt{abace} is preceded by \texttt{d}, and since \texttt{da}
 occurs only in the image of \texttt{ba}, the occurrence of \texttt{da} in $u_2$
 is followed by \texttt{bac}, which yields a square starting 1, 2 or 3 positions
 after the occurrence of $ww$.
Again this takes us back to the first case.

In all cases we deduce the existence of a square in $g^k(\texttt{a})$, which is
 a contradiction with the definition of $k$.
Therefore there is no square in $\mathbf{g}$ containing at least
 four occurrences of \texttt{abac}.
\end{proof}

The next corollary is a direct consequence of Lemma~\ref{lemm-3}
 and Proposition~\ref{prop-1}.

\begin{corollary}\label{coro-1}
The infinite word $\mathbf{g}$ is square-free, or equivalently, the morphism $g$
 is weakly square-free.
\end{corollary}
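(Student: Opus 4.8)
The plan is to argue by a simple case split on how many occurrences of the synchronising word \texttt{abac} a hypothetical square can contain, and to observe that the two preceding results already exhaust all possibilities. First I would suppose, for contradiction, that $\mathbf{g}$ contains a square $ww$. Every factor of $\mathbf{g}$ contains some nonnegative number of occurrences of \texttt{abac}, so in particular $ww$ does. This count is either strictly less than four or at least four, and these two alternatives cover every nonnegative integer, leaving no third case to treat.

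The second step is to dispatch each alternative by appeal to the work already done. If $ww$ contains fewer than four occurrences of \texttt{abac}, then Lemma~\ref{lemm-3} forbids it outright. If instead $ww$ contains at least four occurrences, then Proposition~\ref{prop-1} forbids it. Either way we reach a contradiction, so no square $ww$ can occur in $\mathbf{g}$; that is, $\mathbf{g}$ is square-free. The final equivalence is immediate from the terminology fixed earlier in Section~\ref{sect-wsf}: the morphism $g$ is by definition weakly square-free exactly when its fixed point $\mathbf{g} = g^\infty(\texttt{a})$ is square-free, so the two formulations of the corollary say the same thing.

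There is essentially no obstacle remaining at this stage, since all the combinatorial effort was absorbed into Lemma~\ref{lemm-3} (the finite backtracking check on the short factors in $\texttt{a}((A_6\setminus\{\texttt{a}\})^*\texttt{a})^5$) and into the long case analysis of Proposition~\ref{prop-1}. The only point that must be stated with care is completeness of the dichotomy: "fewer than four" and "at least four" genuinely partition the possible occurrence counts, so that invoking the two results together leaves no square unaccounted for. This makes the corollary a one-line deduction rather than a place where new difficulties arise.
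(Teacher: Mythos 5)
Your proposal is correct and takes essentially the same route as the paper, which states Corollary~\ref{coro-1} as a direct consequence of Lemma~\ref{lemm-3} and Proposition~\ref{prop-1}; your explicit dichotomy on the number of occurrences of \texttt{abac} (fewer than four versus at least four) is precisely the case split those two results were designed to exhaust. Your remark that the equivalence with weak square-freeness of $g$ is definitional, as fixed at the start of Section~\ref{sect-wsf}, also matches the paper's usage.
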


%---------%---------%---------%---------%---------%---------%---------%--------%
\section{Binary translation}
\label{sect-trans}

The second part of the proof of Theorem~\ref{theo-2} consists in showing that the
 special infinite square-free word on $6$ letters introduced in the previous section
 can be transformed into the desired binary word.
This is done with a second morphism $h$ from $A_6^*$ to $B^*$ defined by
$$\cases{
  h(\mathtt{a}) = \mathtt{10011}, & \cr
  h(\mathtt{b}) = \mathtt{01100}, & \cr
  h(\mathtt{c}) = \mathtt{01001}, &\cr
  h(\mathtt{d}) = \mathtt{10110}, &\cr
  h(\mathtt{e}) = \mathtt{0110}, &\cr
  h(\mathtt{f}) = \mathtt{1001}.
  }$$
Note that the codewords of $h$ do not form a prefix code, nor a suffix code,
 nor even a uniquely-decipherable code!
We have for example $g(\mathtt{ae}) = \mathtt{10011}\cdot\mathtt{0110}
 = \mathtt{1001}\cdot\mathtt{10110} = g(\mathtt{fd})$.
However, parsing the word $h(y)$ when $y$ is a factor of $\mathbf{g}$ is unique
 due to the absence of some doublets and triplets in it
 (see Lemmas~\ref{lemm-1} and \ref{lemm-2}).
For example $\mathtt{fd}$ does not occur, which induces the unique parsing of
 $\mathtt{100110110}$
 as $\mathtt{10011}\cdot\mathtt{0110}$.

\begin{proposition}\label{prop-2}
The infinite word $\mathbf{h} = h(g^\infty(\mathtt{a}))$ contains no factor of
 exponent larger than $7/3$.
It contains the $12$ squares
 $\mathtt{0}^2$, $\mathtt{1}^2$, $(\mathtt{01})^2$,
 $(\mathtt{10})^2$, $(\mathtt{001})^2$, $(\mathtt{010})^2$,
 $(\mathtt{011})^2$, $(\mathtt{100})^2$, $(\mathtt{101})^2$,
 $(\mathtt{110})^2$, $(\mathtt{01101001})^2$, $(\mathtt{10010110})^2$
 only.
Words $\mathtt{0110110}$ and $\mathtt{1001001}$ are the only factors
 with an exponent larger than $2$.
\end{proposition}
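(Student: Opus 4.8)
The overall plan is to exploit the unique-parsing (synchronization) property of $h$ on factors of $\mathbf{g}$ and then to separate factors of $\mathbf{h}$ according to their length: those that are short are settled by a finite computation, while those that are long are pulled back to $\mathbf{g}$ and killed by its square-freeness (Corollary~\ref{coro-1}). Throughout I rely on the fact, noted after the definition of $h$, that although $h$ is not uniquely decipherable in general, the restriction on admissible doublets and triplets (Lemmas~\ref{lemm-1} and~\ref{lemm-2}) makes the parse of any factor $h(y)$, with $y$ a factor of $\mathbf{g}$, unique. My first task would be to quantify this as a synchronization lemma: there is a small constant $C$ such that reading any window of length $C$ of $\mathbf{h}$ determines the codeword boundary inside it and the codeword starting there. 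The only local ambiguities are the length-$4$ codewords $h(\mathtt{e})=\mathtt{0110}$ and $h(\mathtt{f})=\mathtt{1001}$, which are prefixes of $h(\mathtt{b})$ and $h(\mathtt{a})$ respectively; these are resolved by context, since $\mathtt{e}$ occurs only in $\mathtt{cea}$ and $\mathtt{f}$ only in $\mathtt{dfb}$.

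For the finite part I would fix a length bound $L_0$ and observe that every factor of $\mathbf{h}$ of length at most $L_0$ occurs inside $h(y)$ for a factor $y$ of $\mathbf{g}$ of bounded length; by Lemmas~\ref{lemm-1} and~\ref{lemm-2} there are only finitely many such $y$, so their images can be scanned exhaustively. This check is what exhibits the twelve squares and the two words $\mathtt{0110110}$ and $\mathtt{1001001}$ of exponent $7/3$, the largest square period being $8$, and confirms that no factor of length at most $L_0$ has exponent above $7/3$.

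For a factor $z$ of period $p$ and length exceeding $L_0$, the synchronization lemma lets me transport its periodicity to $\mathbf{g}$: every codeword boundary of $z$ lying more than $C$ from its right end is carried by the length-$p$ shift onto another boundary with the same codeword, so the underlying letters of $\mathbf{g}$ are periodic over the corresponding range, and the resulting repetition of $\mathbf{g}$ is shorter than $z$ by at most the delay $C$. Hence an exponent exceeding $2$ by more than $C/p$ produces a factor of $\mathbf{g}$ of exponent above $2$, that is a square, contradicting Corollary~\ref{coro-1}. Taking $L_0$ large enough that a factor of length $L_0$ and period below $3C$ would already violate this bound forces every factor longer than $L_0$ to have period $p \ge 3C$, and then its exponent is at most $2 + C/p \le 7/3$; with the finite check this establishes that $\mathbf{h}$ has no factor of exponent larger than $7/3$.

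The delicate point, and the one I expect to be the main obstacle, is the square count together with the claim that the two septers are the \emph{only} factors of exponent above $2$. A square of $\mathbf{h}$ has exponent exactly $2$, so pulling it back through the synchronization delay yields only a factor of $\mathbf{g}$ of exponent $2 - O(C/p)$, a near-square rather than a square; square-freeness of $\mathbf{g}$ alone does not forbid it, and the same difficulty affects powers whose exponent lies just above $2$. To close this gap I would analyse how the bounded boundary offsets can combine and show that for such a preimage near-square of $\mathbf{g}$ the two binary blocks cannot match exactly unless the period is small, so that no square of period greater than $8$ and no further power of exponent above $2$ can occur. This requires a case analysis on the possible offsets and on the codewords abutting the repetition, in the same spirit as the proof of Proposition~\ref{prop-1}, and it is here that the precise shapes of the codewords of $h$ do the real work.
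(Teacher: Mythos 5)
Your skeleton coincides with the paper's: unique parsing of $h(y)$ for factors $y$ of $\mathbf{g}$ (resting on the doublet/triplet restrictions of Lemmas~\ref{lemm-1} and~\ref{lemm-2}), a finite computation exhibiting the twelve squares and the two $7/3$-powers, and a pull-back of long repetitions to $\mathbf{g}$ killed by Corollary~\ref{coro-1}. Your quantitative argument for the exponent bound (synchronization delay $C$, period forced above $3C$ by the finite check, hence exponent at most $2+C/p\le 7/3$ for long factors) is sound in outline. The genuine gap is that everything the proposition actually asserts beyond the $7/3$ bound --- that the listed $12$ squares are the \emph{only} squares and that $\texttt{0110110}$, $\texttt{1001001}$ are the \emph{only} factors of exponent above $2$ --- lives exactly in the step you defer with ``I would analyse how the bounded boundary offsets can combine.'' That step is not a closing detail: it is the entire body of the paper's proof, a case analysis over the five gap words between consecutive occurrences of $h(\mathtt{a})=\mathtt{10011}$ (Table~\ref{tabl-gap2}), in parallel with the structure of Proposition~\ref{prop-1}, plus four further cases where $h(\mathtt{a})$ occurs inside $u_1v_1$.

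Moreover, the analysis you sketch would not terminate in the form you describe. An exact square $ww$ of $\mathbf{h}$ does not, after one synchronization step, yield either a square of $\mathbf{g}$ or a small period: in several of the paper's cases ($h(\mathtt{bd})$, $h(\mathtt{cb})$, $h(\mathtt{ce})$, $h(\mathtt{bdfb})$) the context analysis around the pre-image $U$ produces a configuration of \emph{exactly the same shape} but shorter, because the offending pattern is reproduced by the composition of $h$ and $g$; a bounded check on ``how the offsets combine'' can therefore never close these cases. The paper handles them by an infinite-descent (minimal counterexample) argument, tracking admissible gap successors through the tries built from the graph of Figure~\ref{figu-graph}, until the descent bottoms out in an explicit non-factor --- e.g.\ the forbidden doublet $\mathtt{fe}$, or the word that would be the image of $\mathtt{bdabaceaba}$. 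Without recognizing this self-reproducing behaviour and supplying the descent that defeats it, your plan stalls precisely at the point you yourself flag as the main obstacle, so the claims about the square count and about exponents in the interval $(2,7/3]$ remain unproven in the proposal as written.
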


\begin{figure}
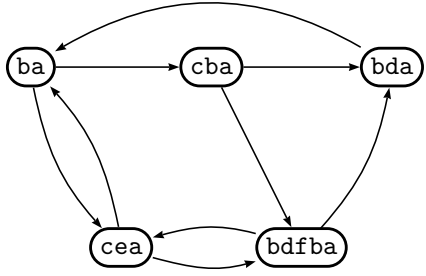

\begin{center}
%\ShowFrame \ShowGrid
\VCDraw[2]{%\Large
\begin{VCPicture}{(0,1)(6,4)}
% etats
\StateVar[\mathtt{ba}]{(1,3)}{1}
\StateVar[\mathtt{cba}]{(3,3)}{2}
\StateVar[\mathtt{bda}]{(5,3)}{3}
\StateVar[\mathtt{cea}]{(2,1)}{4}
\StateVar[\mathtt{bdfba}]{(4,1)}{5}
% transitions
\Edge{1}{2}{}
\Edge{2}{3}{}
\LArcR{3}{1}{}
\ArcR{1}{4}{}
\ArcR{4}{1}{}
\ArcR{4}{5}{}
\ArcR{5}{4}{}
\Edge{2}{5}{}
\ArcR{5}{3}{}
\end{VCPicture}
}
\end{center}
\caption{Graph showing immediate successors of gaps in the word $\mathbf{g}$:
 a suffix of it following an occurrence of $\mathtt{a}$ is the label of an infinite path.
}\label{figu-graph}
\end{figure}

The proof is based on the fact that occurrences of $\mathtt{10011}$ in $\mathbf{h}$
 identify occurrences of $\mathtt{a}$ in $\mathbf{g}$ and on the unique parsing
 mentioned above.
It proceeds by considering several cases according to the gaps between consecutive
 occurrences of $\mathtt{10011}$ (see Table~\ref{tabl-gap2}), associated with gaps
 between consecutive occurrences of $\mathtt{a}$ in $\mathbf{g}$,
 which leads to analyse paths in the graph of Figure~\ref{figu-graph}.

\begin{table}
\caption{Gaps between consecutive occurrences of $\mathtt{10011}$ in $\mathbf{h}$.
}\label{tabl-gap2}
$$\begin{array}{|lclr|}
\hline
 h(\mathtt{b})    &=& \mathtt{01100}               &  5  \\
 h(\mathtt{cb})   &=& \mathtt{0100101100}          & 10 \\
 h(\mathtt{bd})   &=& \mathtt{0110010110}          & 10 \\
 h(\mathtt{ce})   &=& \mathtt{010010110}           &  9 \\
 h(\mathtt{bdfb}) &=& \mathtt{0110010110100101100} & 19 \\
\hline
\end{array}$$
\end{table}

\begin{proof}
We show that if $\mathbf{h}$ would contain a square not in the list it would come
 from a square in $\mathbf{g}$,
 which cannot be since $\mathbf{g}$ is square-free (Corollary~\ref{coro-1}).

Suppose $\mathbf{h}$ contains the square $w^2$.
It is a factor of $h(g^k(\mathtt{a}))$, for some integer $k$ and can be written
 $\underbrace{v_0(h(\mathtt{a})\cdots h(\mathtt{a}))u_1}
 \underbrace{v_1(h(\mathtt{a})\cdots h(\mathtt{a}))u_2}$.
The central part of $w$ is the image of a unique square-free factor $U$ of
 $g^k(\mathtt{a})$ due to the unique parsing mentioned above:
$$h(U) = (h(\mathtt{a})\cdots h(\mathtt{a})) = v_0^{-1}wu_1^{-1} = v_1^{-1}wu_2^{-1}.$$

We proceed through different cases as in the proof of Proposition~\ref{prop-1}.
\paragraph{No $h(\mathtt{a})$ in $u_1v_1$.}
\begin{enumerate}
\item $u_1v_1 = \mathtt{01100}$ corresponds to $h(\mathtt{b})$ only.

If  $|v_1|>1$, then $v_0$ belongs to $h(\mathtt{b})$, $\mathtt{b}U\mathtt{b}U$ is a square.
Else $|u_1|\geq 4$ so $u_2$ belongs to $h(\mathtt{b})$,
 it cannot belong to $h(\mathtt{e})$ since $\mathtt{ae}$ is not a factor of
 $\mathbf{g}$, therefore $U\mathtt{b}U\mathtt{b}$ is a square of $\mathbf{g}$.

\item $u_1v_1 = \mathtt{0110010110}$ corresponds to $h(\mathtt{bd})$.

$$v_0\underbrace{(h(\mathtt{a})\cdots h(\mathtt{a}))}h(\mathtt{bd})\underbrace{(h(\mathtt{a})\cdots h(\mathtt{a}))}u_2$$
 the word $\mathtt{abda}$ is a factor of $g(\mathtt{ba})$ only,
 so $U$ has a prefix $\mathtt{abac}$ and a suffix $\mathtt{ba}$
 (Note that $U$ cannot be $\mathtt{aba}$ since $\mathtt{ababdaba}$ is not a factor of
 $\mathbf{g}$).
$$v_0\underbrace{(h(\mathtt{abac})\cdots h(\mathtt{ba}))}h(\mathtt{bd})%
\underbrace{(h(\mathtt{abac})\cdots h(\mathtt{ba}))}u_2$$
If $u_2$ comes from or has a prefix  $h(\mathtt{b})$ then the letter after $\mathtt{bab}$
 is always $\mathtt{d}$ so we have the square $U\mathtt{bd}U\mathtt{bd}$.
Then $u_2$ is a prefix of or has a prefix $h(\mathtt{c})$, the longest common prefix (LCP)
 of $h(\mathtt{c})$ and $h(\mathtt{b})$ is $\mathtt{01}$,
 so $v_0$ has a suffix $\mathtt{10010110}$, which
 is a suffix of $h(\mathtt{bd})$ or $h(\mathtt{ce})$.
If $v_0$ comes from  $h(\mathtt{bd})$ then we have the square $\mathtt{bd}U\mathtt{bd}U$.
So  $v_0$ is a suffix of  $h(\mathtt{ce})$
$$h(\mathtt{ce})\underbrace{(h(\mathtt{abac})\cdots h(\mathtt{ba}))}h(\mathtt{bd})%
\underbrace{(h(\mathtt{abac})\cdots h(\mathtt{ba}) )}h(\mathtt{c}).$$

\begin{center}
\pstree[treemode=R,treesep=1cm,levelsep=1.2cm]
{\Tr{cea}}{%
 \pstree{\Tr {ba}}{%

    \pstree{\Tr{cba}}{%
     \pstree{\Tr{bda}}{%
     \Tr*{ ba\bf X}}
	    \pstree{\Tr{bdfba}}{%
	    \Tr*{ bda\bf X}
	     \pstree{\Tr{cea}}{%
	       \Tr*{ ba \bf XX}
         \Tr*{ bdfba\bf ...}
 		     }}}

		     \Tr*{ cea \bf XX}		
		}}
\end{center}

\begin{center}
\pstree[treemode=R,treesep=1cm,levelsep=1.5cm]
{\Tr{bda}}{%
 \pstree{\Tr {ba}}{%

    \pstree{\Tr{cba}}{%
     \pstree{\Tr{bda}}{%
     \Tr*{ ba\bf XX}}
	    \pstree{\Tr{bdfba}}{%
	    \Tr*{ bda\bf XX}
	     \pstree{\Tr{cea}}{%
	       \Tr*{ ba \bf X}
         \Tr*{ bdfba\bf ...}
 		     }}}

		     \Tr*{ cea \bf X}		
		}}
\end{center}

The sign XX shows that the particular branch of the trie terminates because
 either a square occurs or the sequence is not a factor of $\mathbf{g}$.
The sign X on the other hand represents the termination of a particular branch
 as a consequence of the discontinuation of the corresponding branch in the other trie.
If we continue these tries we will have:
$$\mathtt{ce}\underbrace{\tt\,abac\,babd\,fbace\,abdf\,babd\,abac\,eabdf\,bace\,abac\,babd\,abac\,eabdf\dots ba}$$
$$\mathtt{bd}\underbrace{\tt\,abac\,babd\,fbace\,abdf\,babd\,abac\,eabdf\,bace\,abac\,babd\,abac\,eabdf\dots ba}\mathtt{c}$$
 which is the image of
$$\mathtt{eabdf\,bace}\underbrace{\tt \,abac\dots abac} \mathtt{babd\,fbace}%
\underbrace{\tt \,abac\dots abac}\mathtt{\,e}$$
 itself image of
$$\mathtt{ce}\underbrace{\tt a\dots a}\mathtt{bd}\underbrace{\tt a\dots a}\mathtt{c}$$
 so we have the same situation as at the starting point; but $U$ is shorter in this case,
 therefore if we continue this process we should have
$$\mathtt{ce\,abac\,babd\,fbace\,abdf\,babd\,\,abac\,babd\,fbace\,abdf\,bace\,a}$$
but $\mathtt{abdf\,bace}$ is the image of $\mathtt{fe}$ that is not in $D$
 (Lemma~\ref{lemm-1}).

\item $u_1v_1 = \mathtt{0100101100}$ corresponds to $h(\mathtt{cb})$.

The word $\mathtt{acba}$ is a factor of $g(\mathtt{ab})$ only,
 so $U$ has a prefix $\mathtt{abd}$ and a suffix $\mathtt{aba}$:
$$v_0\underbrace{(h(\mathtt{abd})\dots h(\mathtt{aba}) ) }h(\mathtt{cb})%
\underbrace{(h(\mathtt{abd})\dots h(\mathtt{aba}) )}u_2$$
The word $u_2$ comes from or has a prefix  $h(\mathtt{c})$.
If the letter after it is $\mathtt{b}$, we have the square $U\mathtt{cb}U\mathtt{cb}$.

Otherwise $u_2$ comes from or has a prefix  $h(\mathtt{ce})$.
If $v_0$ comes from or has a suffix $h(\mathtt{b})$ then  we have the square
 $\mathtt{b}U\mathtt{cb}U\mathtt{c}$.

Therefore the letter before $U$ is $\mathtt{e}$ preceded by $\mathtt{c}$,
 i.e. the string before the left $U$ is $\mathtt{ce}$:
$$h(\mathtt{ce})\underbrace{(h(\mathtt{abd})\dots h(\mathtt{aba}) ) }h(\mathtt{cb})%
\underbrace{(h(\mathtt{abd})\dots h(\mathtt{aba}) )}h(\mathtt{ce}).$$

\begin{center}
\pstree[treemode=R,treesep=1cm,levelsep=1.2cm]
{\Tr{cea}}{%
 \pstree{\Tr {bdfba}}{%
  \Tr*{ bda \bf X}

    \pstree{\Tr{cea}}{%
     \pstree{\Tr{ba}}{%
     \pstree{\Tr{cba}}{
     \Tr*{ bda\bf ...}
     \Tr*{ bdfba\bf X}}
     \Tr*{ cea\bf X}}
         \Tr*{ bdfba\bf XX}
 		     }
		
		}}
\end{center}

\begin{center}
\pstree[treemode=R,treesep=1cm,levelsep=1.2cm]
{\Tr{cba}}{%
 \pstree{\Tr {bdfba}}{%
  \Tr*{ bda \bf XX}

    \pstree{\Tr{cea}}{%
     \pstree{\Tr{ba}}{%
     \pstree{\Tr{cba}}{
     \Tr*{ bda\bf ...}
     \Tr*{ bdfba\bf XX}}
     \Tr*{ cea\bf XX}}
         \Tr*{ bdfba\bf X}
 		     }
		
		}}
\end{center}

Now we have the same situation as in the previous case
$$h(g(\mathtt{ce}))\underbrace{(h(g(\mathtt{abac}))\dots h(g(\mathtt{ba})) ) }h(g(\mathtt{bd}))%
\underbrace{(h(g(\mathtt{abac}))\dots h(g(\mathtt{ba})) )}h(g(\mathtt{c})).$$

\item $u_1v_1 =\mathtt{010010110}$ corresponds to $h(\mathtt{ce})$ only.

Before $\mathtt{c}$ is always $\mathtt{ba}$ (Lemma~\ref{lemm-2})
 and after $\mathtt{e}$ is $\mathtt{ab}$ (Lemma~\ref{lemm-2}),
 so $\mathtt{ab}$ is a prefix of $U$ and $\mathtt{ba}$ is a suffix of $U$:
$$v_0\underbrace{(h(\mathtt{ab})\dots h(\mathtt{ba}) ) }h(\mathtt{ce})%
\underbrace{(h(\mathtt{ab})\dots h(\mathtt{ba}) )}u_2.$$

(i): $u_2$ belongs to $h(\mathtt{cb})$ since we cannot have $U\mathtt{ce}U\mathtt{ce}$
 and the letter after $\mathtt{c}$ is $\mathtt{b}$ or $\mathtt{e}$ (Lemma~\ref{lemm-1}):
$$v_0\underbrace{(h(\mathtt{ab})\dots h(\mathtt{ba}) ) }h(\mathtt{ce})%
\underbrace{(h(\mathtt{ab})\dots h(\mathtt{ba}) )}h(\mathtt{cb})$$
The letter before $\mathtt{bacb}$ is $\mathtt{a}$ so:
$$v_0\underbrace{(h(\mathtt{ab})\dots h(\mathtt{aba}))}h(\mathtt{ce})%
\underbrace{(h(\mathtt{ab})\dots h(\mathtt{aba}))}h(\mathtt{cb}).$$

\textbf{NOTE}: $U$ is not $\mathtt{aba}$ since $\mathtt{abaceabacb}$
 is not a factor of $g^k(\mathtt{a})$.

Now $\mathtt{abace}$ is a prefix of the image of $\mathtt{ac}$
 so $U$ has a prefix $\mathtt{abdf}$
 and the word before it is either $\mathtt{ce}$ or $\mathtt{b}$;
 the first choice gives the square $\mathtt{ce}U\mathtt{ce}U$ and the second choice:
$$h(\mathtt{b})\underbrace{(h(\mathtt{abdf})\dots h(\mathtt{aba}) ) }h(\mathtt{ce})%
\underbrace{(h(\mathtt{abdf})\dots h(\mathtt{aba}) )}h(\mathtt{cb}).$$

\begin{center}
\pstree[treemode=R,treesep=1cm,levelsep=1.2cm]
{\Tr{ba}}{%
 \pstree{\Tr {bdfba}}{%

  \Tr*{ bda \bf XX}

    \pstree{\Tr{cea}}{%
     \pstree{\Tr{ba}}{%
     \pstree{\Tr{cba}}{
     \pstree{\Tr{bda}}{%
     \pstree{\Tr{ba}}{%
     \Tr*{ cba\bf X}
     \pstree{\Tr{cea}}{%
          \Tr*{ ba\bf XX}
     \Tr*{ bdfba\bf ...}}}}
      \Tr*{ bdfba\bf XX}
 		     }
     \Tr*{ cea\bf XX}}
	
         \Tr*{ bdfba\bf X}
 		     }
		
		}}
\end{center}

\begin{center}
\pstree[treemode=R,treesep=1cm,levelsep=1.2cm]
{\Tr{cea}}{%
 \pstree{\Tr {bdfba}}{%

  \Tr*{ bda \bf X}

    \pstree{\Tr{cea}}{%
     \pstree{\Tr{ba}}{%
     \pstree{\Tr{cba}}{
     \pstree{\Tr{bda}}{%
     \pstree{\Tr{ba}}{%
     \Tr*{ cba\bf XX}
     \pstree{\Tr{cea}}{%
          \Tr*{ ba\bf X}
     \Tr*{ bdfba\bf ...}}}}
      \Tr*{ bdfba\bf X}
 		     }
     \Tr*{ cea\bf X}}
	
         \Tr*{ bdfba\bf XX}
 		     }
		
		}}
\end{center}

Now if we continue the above tries we get:
$$\mathtt{b}\underbrace{\tt abd\,fbace\,abac\,babd\,abac\,eabdf\,babd\,abac\,babd\,fbace\,abdf\,ba\dots ba}$$
$$\mathtt{ce}\underbrace{\tt abdf\,bace\,abac\,babd\,abac\,eabdf\,babd\,abac\,babd\,fbace\,abdf\,ba\dots ba}\mathtt{cb}$$
 which is the image of
$$\mathtt{bd}\underbrace{\tt \,abac\,babd\,fbace\,abdf\dots ba} \mathtt{ce}%
\underbrace{\tt \,abac\,babd\,fbace\,abdf\dots ba}\mathtt{b}.$$
This is the same situation as the next case and we will see that after going one step back
 it brings us back to this case again.
Now we are exactly in the same situation as at the beginning except that the length
 of the word $X=\mathtt{abdf}\dots\mathtt{a}$ is shorter than $U$.
Repeating this process enough times we should see that the word
$$\mathtt{ babd\,fbace\,abac\,babd\,abac\,eabdf\,bace\,abac\,babd\,aba}$$
 which is the image of $\mathtt{bdabaceaba}$, is not a factor of $g^k(\mathtt{a})$.

(ii): $u_2$ belongs to $h(\mathtt{b})$
 (the LCP of $h(\mathtt{c})$ and $h(\mathtt{b})$ is $\mathtt{01}$)
 so $v_0$ must have a suffix $\mathtt{0010110}$, which belongs to $h(\mathtt{bd})$
 because if it belongs to $h(\mathtt{ce})$ then $\mathtt{ce}U\mathtt{ce}U$ is a square.

$$h(\mathtt{bd})\underbrace{(h(\mathtt{ab})\dots h(\mathtt{ba}) ) }h(\mathtt{ce})%
\underbrace{(h(\mathtt{ab})\dots h(\mathtt{ba}) )}h(\mathtt{b}).$$

\begin{center}
\pstree[treemode=R,treesep=1cm,levelsep=1.2cm]
{\Tr{bda}}{%
 \pstree{\Tr {ba}}{%
  \pstree{\Tr{cba}}{%
  \pstree{\Tr{bda}}{%
   \Tr*{ ba\bf XX}}
   \pstree{\Tr{bdfba}}{%
   \Tr*{ bda\bf X}
   \Tr*{ cea\bf ...}
    }}
    \pstree{\Tr{cea}}{%
     \Tr*{ ba\bf X}
	
      \Tr*{ bdfba\bf XX}
 		     }
		}}
\end{center}

\begin{center}
\pstree[treemode=R,treesep=1cm,levelsep=1.2cm]
{\Tr{cea}}{%
 \pstree{\Tr {ba}}{%
  \pstree{\Tr{cba}}{%
  \pstree{\Tr{bda}}{%
   \Tr*{ ba\bf X}}
   \pstree{\Tr{bdfba}}{%
   \Tr*{ bda\bf XX}
   \Tr*{ cea\bf ...}
    }}
    \pstree{\Tr{cea}}{%
     \Tr*{ ba\bf XX}
	
      \Tr*{ bdfba\bf XX}
 		     }	
		}}
\end{center}

Continuing this trie we have
$$\mathtt{bd}\underbrace{\tt abac\,babd\,fbace\,a\dots ba}\mathtt{ce}%
\underbrace{\tt \,abac\,babd\,fbace\,a\dots ba}\mathtt{bd}.$$
This is factor of
$g(\mathtt{b}\underbrace{\tt abdf\dots a}\mathtt{ce} \underbrace{\tt abdf\dots a}\mathtt{cb})$
 which is the previous case.
%\end{itemize}

\item $u_1v_1 = \mathtt{0110010110100101100}$ corresponds to $h(\mathtt{bdfb})$ only.
This case is dealt with the same method.
$$u_0\underbrace{(h(\mathtt{a})\dots h(\mathtt{a}) ) }h(\mathtt{bdfb})%
\underbrace{(h(\mathtt{a})\dots h(\mathtt{a}) )}u_2.$$
If $u_2$ belongs to $h(\mathtt{c})$, the LCP of $h(\mathtt{c})$
 and $h(\mathtt{b})$ is $\mathtt{01}$ so $u_0$ must have a suffix $\mathtt{10010110100101100}$,
 therefore $u_0$ belongs to $h(\mathtt{bdfb})$.
But $\mathtt{bdfb}U\mathtt{bdfb}U$
 is a square and a factor of $g^k(a)$; a contradiction, so $u_2$ belongs to
 or has a prefix $h(\mathtt{b})$.
We have two choices here.

(i): the next word after the right occurrence of $U$ is $\mathtt{ba}$.
The LCP of $h(\mathtt{bd})$ and $h(\mathtt{ba})$ is $\mathtt{10}$,
 $u_0$ has suffix of $\mathtt{110100101100}$, so it either belongs to $h(\mathtt{dfb})$
 or $h(\mathtt{acb})$.
The first case gives that $\mathtt{dbf}U\mathtt{bdbf}U\mathtt{b}$
 is a square and a factor of $g^k(a)$, a contradiction.
So $u_0$ belongs to $h(\mathtt{acb})$:
$$h(\mathtt{acb})\underbrace{(h(\mathtt{abda})\dots h(\mathtt{a}) ) }h(\mathtt{bdf\,b})%
\underbrace{(h(\mathtt{abda})\dots h(\mathtt{a}) )}h(\mathtt{ba}).$$
Prefixes and suffixes of $U$ are determined only by looking at $D$ and $T$.
 %tree
 \begin{center}
\pstree[treemode=R,treesep=1cm,levelsep=1.2cm]
{\Tr{cba}}{%
 \pstree{\Tr {bda}}{%

    \pstree{\Tr{ba}}{%
     \Tr*{ cba \bf XX}
     \pstree{\Tr{cea}}{%
     \Tr*{ ba\bf X}
	    \pstree{\Tr{bdfba}}{%
	    \Tr*{ bda\bf X}
	
         \Tr*{ cea\bf ...}
 		     }}}

		}}
\end{center}

\begin{center}
\pstree[treemode=R,treesep=1cm,levelsep=1.2cm]
{\Tr{bdfba}}{%
 \pstree{\Tr {bda}}{%

    \pstree{\Tr{ba}}{%
     \Tr*{ cba \bf X}
     \pstree{\Tr{cea}}{%
     \Tr*{ ba\bf X}
	    \pstree{\Tr{bdfba}}{%
	    \Tr*{ bda\bf XX}
	
         \Tr*{ cea\bf ...}
 		     }}}

		}}
\end{center}

We have:
$$\mathtt{abac\,babd}\underbrace{\tt\,abac\,eabdf\,bace\dots abac}\mathtt{babd\,fbace}$$
$$\mathtt{abdf\,babd}\underbrace{\tt \,abac\,eabdf\,bace\dots abac}\mathtt{babd\,fbace\,abac}$$
 which is the image of
$$\mathtt{ab}\underbrace{\tt ace\dots a}\mathtt{bdfb}%
\underbrace{\tt ace\dots a}\mathtt{bda}.$$
Now this is the next case so if we go back enough steps we should see that the length
 of $U$ decreases and at the end we get
$$\mathtt{ac\,babd\,abac\,eabdf\,babd\,abac\,eaba}$$
 but this is not a factor of $g^k(a)$, a contradiction.

(ii): the word after $U$ is $\mathtt{bd}$.
 Now here the only possible letter after $\mathtt{abd}$ is $\mathtt{a}$
 since if it is $\mathtt{f}$ it is a prefix of $\mathtt{fb}$ so we have
 $U\mathtt{bdfb}U\mathtt{bdfb}$, a contradiction.
As the LCP of $h(\mathtt{bdfb})$ and $h(\mathtt{bda})$ is $\mathtt{01100101101001}$
 $u_0$ must have a suffix $\mathtt{01100}$ so it can belong to $h(\mathtt{ab})$
 or $h(\mathtt{acb})$.

 (I):
$$h(\mathtt{ab})\underbrace{(h(\mathtt{a})\dots h(\mathtt{a}) ) }h(\mathtt{bdfb})%
\underbrace{(h(\mathtt{a})\dots h(\mathtt{a}) )}h(\mathtt{bda}).$$
 Only using $D$, $T$ and the Figure \ref{figu-graph} we can continue building $U$,
$$h(\mathtt{ab})\underbrace{(h(\mathtt{ace})\dots h(\mathtt{ba}) ) }h(\mathtt{bdfb})%
\underbrace{(h(\mathtt{acea})\dots h(\mathtt{ba}) )}h(\mathtt{bda}).$$
 Continuing further we get:
$$h(\mathtt{abac\,eabdf\,babd}\underbrace{\tt \,abac\,\dots abac}\mathtt{babd\,fbace\,abdf\,babd}%
\underbrace{\tt \,abac\dots abac}\mathtt{babda}).$$
This is the image of
$$h(g( \mathtt{acb}\underbrace{\tt a\dots a}\mathtt{bdfb}%
\underbrace{\tt a\dots a}\mathtt{ba}))$$
 and we are back to the case above.

 (II):
$$h(\mathtt{acb})\underbrace{(h(\mathtt{a})\dots h(\mathtt{a}) ) }h(\mathtt{bdfb})%
\underbrace{(h(\mathtt{a})\dots h(\mathtt{a}) )}h(\mathtt{bda}).$$
 Using the same method we build the word $U$:
$$\mathtt{ac\,b}\underbrace{\tt abd\dots ba}\mathtt{bd\,fb}%
\underbrace{\tt ace\,\dots ba}\mathtt{bd\,a}.$$
Here we cannot go further as $U$ cannot have $\mathtt{abd}$ nor $\mathtt{ace}$ as
 prefixes at the same time.
\end{enumerate}

\paragraph{An occurrence of $h(\mathtt{a})$ in $u_1v_1$.}
Looking at Figure \ref{figu-graph}, the image of the concatenation of two connected
 nodes (distance 1 arrow) are the possibilities for $u_1v_1h(\mathtt{a})$,
 but note that the second period of the square must start within $h(a)$,
 starting point of the arrow, otherwise it is one of the cases above.
If the lengths of both nodes are larger than $2$ then by unique parsing we are bound
 to have a square in $g^k(a)$ and get a contradiction.
So we have to consider only the four cases where one of the nodes is $\mathtt{ba}$:

\begin{enumerate}
\item $u_1v_1 = h(\mathtt{bacb}) = \mathtt{01100}\textbf{10011}\mathtt{0100101100}$,
 so $u_2$ must have a prefix $h(\mathtt{b})$ and $u_0$ a suffix of $h(\mathtt{cb})$,
 before $\mathtt{cb}$ is always $\mathtt{a}$, so $acbUbacbUb$ is a square in $g^k(a)$.

\item $u_1v_1 = h(\mathtt{bace}) = \mathtt{01100}\textbf{10011}\mathtt{010010110}$,
 so $u_2$ must have a prefix $h(\mathtt{b})$ and $u_0$ a suffix $h(\mathtt{ce})$,
 before $\mathtt{ce}$ is always $\mathtt{a}$, so $aceUbaceUb$ is a square in $g^k(a)$.

\item $u_1v_1 = h(\mathtt{ceab}) = \mathtt{010010110}\textbf{10011}\mathtt{01100}$,
 so $u_2$ must have a prefix of $h(\mathtt{ce})$ and $u_0$ a suffix of $h(\mathtt{b})$,
 after $\mathtt{ce}$ is always $\mathtt{a}$, so $bUceabUcea$ is a square in $g^k(a)$.

\item $u_1v_1 = h(\mathtt{bdab}) = \mathtt{0110010110}\textbf{10011}\mathtt{01100}$,
 so using tries as before shows that after enough backward iteration we should have
$$\mathtt{fbace\,abdf\,babd\,abac\,babd\,abac\,eabdf\,babd\,abac\,babd\,}$$
 which contains a square.
\end{enumerate}

In all cases the conclusion is that we get a square in $g^k(\mathtt{a})$, a contradiction
 with the definition of $k$.
This completes the proof of Proposition~\ref{prop-2}.
\end{proof}

Theorem~\ref{theo-2} follows immediately from Proposition~\ref{prop-2}.

%---------%---------%---------%---------%---------%---------%---------%--------%
\section{Conclusion}

The constraint on the number of squares imposed on binary words slightly differs from
 the constraint considered by Shallit \cite{Sha04}.
The squares occurring in his word have period smaller than $7$.
Our word contains less squares but their maximal period is $8$.
%We do not know if both restrictions could hold simultaneously.

Looking at repetitions in words on larger alphabets, the subject introduces
 a new type of threshold, that we call the \textit{finite-repetitions threshold} (FRt).
For the alphabet of $a$ letters, $\mbox{FRt}(a)$ is defined as the smallest rational number
 for which there exists an infinite word avoiding $\mbox{FRt}(a)^+$-powers
 and containing a finite number of $r$-powers, where $r$ is Dejean's repetitive threshold.
Karhum{\"a}ki and Shallit results as well as ours show that $\mbox{FRt}(2) = 7/3$.
Our result additionally proves that the associated minimal number of squares is $12$.

Computation shows that the maximal length of $(7/4)^+$-free ternary word
 with only one $7/4$-repetition is $102$.
This leads us state the following conjecture, which has been tested up to length $20000$.

\begin{conjecture}
The \textit{finite-repetitions threshold} of 3-letter alphabet is $\frac{7}{4}$
 and the associated number of $\frac{7}{4}$-powers is $2$.\\
% List of Repetitions: \{$(abcb)^\frac{7}{4}$, $(caba)^\frac{7}{4}$\},\\
\end{conjecture}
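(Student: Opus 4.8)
The plan is to prove the conjecture in the two-morphism style already used for the binary alphabet in Theorem~\ref{theo-2}, splitting it into an upper-bound construction and a finite optimality check. The lower bound $\mbox{FRt}(3)\geq 7/4$ is immediate: by Dejean's theorem \cite{Dej72} (see also \cite{Rao11,CR11}) no infinite ternary word avoids $t^+$-powers for any $t<7/4$, so a fortiori none does so while containing only finitely many $7/4$-powers. Hence the entire content is the upper bound, namely to exhibit an infinite ternary word avoiding $(7/4)^+$-powers that contains exactly two factors of exponent $7/4$ and no factor of larger exponent, together with the claim that two is the smallest possible count.

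For the construction I would first run a pruned backtracking search, as in the binary case, for a long $(7/4)^+$-free ternary word whose only $7/4$-powers are two short factors, and then read off its internal regularities to infer a pair of morphisms. The first morphism $G$, defined on an enlarged auxiliary alphabet, should generate an infinite word $\mathbf{G}=G^\infty(\cdot)$ enjoying a clean weak-freeness property in the spirit of Corollary~\ref{coro-1}: all of its factors have exponent below $7/4$ apart from a controlled finite set. As in Section~\ref{sect-wsf}, I would prove this by pinning down the doublets and triplets that actually occur in $\mathbf{G}$ (the analogues of Lemmas~\ref{lemm-1} and \ref{lemm-2}), fixing a synchronising codeword whose occurrences in $\mathbf{G}$ correspond bijectively to a single auxiliary letter, and ruling out high-exponent factors by a case analysis on the \gap s between consecutive occurrences of that codeword (the analogue of Table~\ref{tabl-gap1}).

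The second morphism $H$, from the auxiliary alphabet to $A_3$, performs the translation, and I would establish a unique-parsing lemma for $H$ restricted to factors of $\mathbf{G}$, relying—exactly as for $h$ in Section~\ref{sect-trans}—on the absent doublets and triplets rather than on $H$ being a code. The core step then mirrors Proposition~\ref{prop-2}: any hypothetical factor $w$ of $\mathbf{H}=H(\mathbf{G})$ of exponent exceeding $7/4$ would, after synchronisation on the codeword, have a central part equal to $H(U)$ for a unique factor $U$ of $\mathbf{G}$, and pushing the excess period back through $H$ and $G$ would force a $(7/4)^+$-power, hence a forbidden repetition, in $\mathbf{G}$, contradicting its weak-freeness. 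The two genuine $7/4$-powers would be exhibited explicitly and shown, by the same gap analysis, to be the only factors of exponent $\geq 7/4$.

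The optimality half—that a single $7/4$-power never suffices—is a terminating computation: a pruned backtracking search establishes that the maximal length of a $(7/4)^+$-free ternary word containing at most one $7/4$-power is finite (for one $7/4$-power it equals $102$), so no infinite such word exists and the associated count is at least two. The main obstacle I anticipate is not this reduction but the construction itself. Unlike the binary setting, where the auxiliary word $\mathbf{g}$ was genuinely square-free, here the lever is only a relative freeness (exponent below $7/4$ off a finite set), and $7/4$-powers are longer and more numerous to track; finding morphisms that simultaneously keep $\mathbf{G}$ clean, make $H$ uniquely parsable on $\mathbf{G}$, and leak exactly two $7/4$-powers into $\mathbf{H}$ with no factor of strictly larger exponent is delicate, and the resulting gap-by-gap verification (the analogue of the trie arguments in Proposition~\ref{prop-2}) will be substantially heavier than for the binary threshold.
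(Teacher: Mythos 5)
You should first be aware that the paper contains no proof of this statement: it is presented as a conjecture, supported only by computation (the maximal length of a $(7/4)^+$-free ternary word containing a single $7/4$-power is $102$, and the claim was tested up to length $20000$). So the real question is whether your proposal itself constitutes a proof, and it does not. The only parts you actually establish are the easy ones: the lower bound $\mbox{FRt}(3) \geq 7/4$, which indeed follows immediately from Dejean's theorem for three letters, and the reduction of ``at least two $7/4$-powers'' to a terminating backtracking computation (zero $7/4$-powers being impossible again by Dejean, one being ruled out by the length-$102$ bound the paper reports). Everything else --- the entire upper bound, which is the whole content of the conjecture --- is conditional on a pair of morphisms $G$ and $H$ that you never exhibit. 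A plan of the form ``search for a long witness, infer morphisms from its regularities, then verify the analogues of Lemmas~\ref{lemm-1}--\ref{lemm-3} and Propositions~\ref{prop-1} and~\ref{prop-2}'' is a research programme, not a proof: in the binary case of Theorem~\ref{theo-2} essentially all of the mathematical work lies in producing $g$ and $h$ and carrying out the synchronisation and gap-by-gap case analysis, and none of that work is done here. You candidly name the construction as the obstacle, which is honest, but flagging the gap does not close it.

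There is also a concrete technical weakness in the transfer step you sketch. In the binary case the auxiliary word $\mathbf{g}$ is genuinely square-free (Corollary~\ref{coro-1}), and a square in $\mathbf{h}$ is pushed back, via unique parsing, to a square in $g^k(\mathtt{a})$ --- an exact combinatorial property (``square upstairs implies square downstairs'') that drives the induction. For the ternary threshold the critical exponent $7/4$ is fractional: a factor of exponent slightly above $7/4$ in $H(\mathbf{G})$ need not pull back to a factor of exponent above $7/4$ in $\mathbf{G}$, because exponents are distorted by the unequal codeword lengths and by the way the period shifts across codeword boundaries. Your phrase ``pushing the excess period back through $H$ and $G$ would force a $(7/4)^+$-power'' is precisely the step that requires a new argument tailored to the specific morphisms, not an appeal to the method of Proposition~\ref{prop-2}. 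Until $G$, $H$, the two exceptional $7/4$-powers, and the full case analysis are actually produced and checked, the statement remains exactly what the paper says it is: a conjecture (one that was settled only in subsequent work of the authors), and your proposal should be read as a plausible strategy for attacking it rather than a proof of it.
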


Values for larger alphabets remain to be explored.

\small
%---------%---------%---------%---------%---------%---------%---------%--------%
%\section*{References}
\bibliographystyle{abbrv}
\bibliography{12squares}

%---------%---------%---------%---------%---------%---------%---------%--------%
\end{document}